\newtheorem{remark}{Remark}
\newtheorem{proposition}{Proposition}
\newcommand{\freq}{\omega}
\newcommand{\acc}{\nu}
\newcommand{\slope}{\acc}
\DeclareMathOperator{\e}{e}
\DeclareMathOperator{\diff}{d}
\DeclareMathOperator{\idty}{Id}
\newcommand{\C}{\mathbb{C}}
\newcommand{\R}{\mathbb{R}}
\begin{document}
%% the square bracket argument will send term to running head in
%% preprint, or running foot in reprint style.

\title{A bio-inspired geometric model for sound reconstruction}

% ie
%\title[JASA/Sample JASA Article]{Sample JASA Article}

%% repeat as needed
\author{Ugo Boscain\thanks{CNRS, LJLL, Sorbonne Universit\'e, Universit\'e de Paris, Inria,  Paris, France. \texttt{ugo.boscain@upmc.fr}}
    \and Dario Prandi\thanks{Universit\'e Paris-Saclay, CNRS, CentraleSup\'elec,  Laboratoire des signaux et syst\`emes, 91190, Gif-sur-Yvette, France. \texttt{dario.prandi@centralesupelec.fr}}
    \and Ludovic Sacchelli\thanks{Univ. Lyon, Universit\'e Claude Bernard Lyon 1, CNRS, LAGEPP UMR 5007, 43 bd du 11 novembre 1918, F-69100 Villeurbanne, France. \texttt{ludovic.sacchelli@univ-lyon1.fr}}
    \and Giuseppina Turco\thanks{CNRS, Laboratoire de Linguistique Formelle, Université de Paris, France. \texttt{gturco@linguist.univ-paris-diderot.fr}}
    }
%\affiliation{CNRS, LJLL, Sorbonne Universit\'e, Universit\'e de Paris, Inria,  Paris, France}
%\email{ugo.boscain@upmc.fr}

% \author{Dario Prandi}
%\affiliation{Universit\'e Paris-Saclay, CNRS, CentraleSup\'elec,  Laboratoire des signaux et syst\`emes, 91190, Gif-sur-Yvette, France.}
%\email{dario.prandi@centralesupelec.fr}

% \author{Ludovic Sacchelli}
%\affiliation{Department of Mathematics, Lehigh University, Bethlehem, PA, USA}
%\email{lus219@lehigh.edu}

% \author{Giuseppina Turco}
%\affiliation{CNRS, Laboratoire de Linguistique Formelle, Université de Paris, France}
%\email{gturco@linguist.univ-paris-diderot.fr}

% ie
%\author{Author One}
%\author{Author Two}
%\author{Author Three}

%\affiliation{}

% ie
%\affiliation{Department1,  University1, City, State ZipCode, Country}

%% for corresponding author
%% for additional information
%\thanks{}

% ie
% \author{Author Four}
% \email{author.four@university.edu}
% \thanks{Also at Another University, City, State ZipCode, Country.}

%% For preprint only,
%  optional, if you want want this message to appear in upper left corner of title page
% \preprint{}

%ie
%\preprint{Author, JASA}		

% optional, if desired:
\date{\today}

%% pacs numbers not used

\maketitle

%  End of title page for Preprint option --------------------------------- %

\begin{abstract}
The reconstruction mechanisms built by the human auditory system during sound reconstruction are still a matter of debate. The purpose of this study is to propose a mathematical model of  sound reconstruction based on the functional architecture of the auditory cortex (A1). The model is inspired by the geometrical modelling of vision, which has undergone a great development in the last ten years. 
There are however fundamental dissimilarities, due to the different role played by time and the different group of symmetries.
The algorithm transforms the degraded sound in an ‘image’ in the time-frequency domain via a short-time Fourier transform. Such an image is then lifted in the Heisenberg group and is reconstructed via a Wilson-Cowan differo-integral equation. Preliminary numerical experiments are provided, showing the good reconstruction properties of the algorithm on synthetic sounds concentrated around two frequencies. 
\end{abstract}

%% See preprint.tex/.pdf or reprint.tex/.pdf for many examples

\section{Introduction}\label{s:intro}

Listening to speech requires the capacity of the auditory system to map incoming sensory input to lexical representations. When the sound is intelligible, this mapping (“recognition”) process is successful. With reduced intelligibility (e.g., due to background noise), the listener has to face the task of recovering the loss of acoustic information. This task is very complex as it requires a higher cognitive load and the ability of repairing missing input. (See \cite{Mattys} for a review on noise in speech.) Yet, (normal hearing) humans are quite able to recover sounds in several effortful listening situations (see for instance \cite{Luce}), ranging from sounds degraded at the source (e.g., hypoarticulated and pathological speech), during transmission (e.g., reverberation) or corrupted by the presence of environmental noise.

So far, work on degraded speech has informed us a lot on the acoustic cues that help the listener to reconstruct missing information (e.g.,\cite{Fernandes2007, Parikh}); the several adverse conditions in which listeners may be able to reconstruct speech sounds (e.g., \cite{Assmann2004,Mattys}); and whether (and at which stage of the auditory process) higher-order knowledge (i.e., our information about words and sentences) helps the system to recover lower-level perceptual information (e.g., \cite{HANNEMANN2007}). 
%
%However, most of these studies are typically based on a statistical approach. A mathematical model informing us on how the human auditory system is able to reconstruct a degraded speech sound is still missing. The aim of this study is to build a neuro-geometric model for sound reconstruction.
However, most of these studies adopt a phenomenological and descriptive approach. More specifically, techniques from previous studies consist in adding synthetic noise to speech sound stimuli, performing spectral and temporal analyses on the stimuli with noise and the same ones without it so to identify acoustic differences, linking the results of these analyses with the outcome from perceptual experiments. In some of these behavioural experiments, for instance, listeners are asked to identify speech units (such as consonants or words) when listening the noisy stimuli. Their accuracy scores provide a measure to the listeners' speech recognition ability.

As it stands, a mathematical model informing us on how the human auditory system is able to reconstruct a degraded speech sound is still missing. The aim of this study is to build a neuro-geometric model for sound reconstruction, stemming from the description of the functional architecture of the auditory cortex.

\subsection{Modelling the auditory cortex}

Knowledge about the functional architecture of the auditory cortex is scarce, and there are difficulties in the application of Gestalt principles for auditory perception. For these reasons, the model we propose is strongly inspired by recent advances in the mathematical modeling of the functional architecture of the primary visual cortex and the processing of visual inputs \cite{Hoffman, Petitot, Citti2006, boscain2010anthropomorphic}, which recently yield very successful applications to image processing \cite{Franken09, Duits2010, Prandi2017, remizovJMIV}.
This idea is not new: neuroscientists take models of V1 as a starting point for understanding the auditory system (see, e.g., \cite{Nelken2011} for a comparison, and \cite{Hickok} for a related discussion in speech processing). Indeed, biological similarities between the structure of the primary visual cortex (V1) and the primary auditory cortex (A1) are well-known to exist. 

An often cited V1-A1 similarity is their “topographic” organization, a general principle determining how visual and auditory inputs are mapped to those neurons responsible for their processing \cite{Rauschecker}. 
Substantial evidence for V1-A1 relation is also provided by studies on animals and on humans with deprived hearing or visual functions showing cross-talk interactions between sensory regions \cite{Sharma, ZATORRE200113}. 
More relevant for our study is the existence of receptive fields of neurons in V1 and A1 that allow for a subdivision of neurons in “simple” and “complex” cells, which supports the idea of a “common canonical processing algorithm within cortical columns” \cite[p.~1]{Tian7892}.  
Together with the appearance in  A1 of singularities typical of V1 (e.g., pinwheels)  \cite{Sharma, Polger}, these findings speak in favor of the idea that V1 and A1 share similar mechanisms of sensory input reconstruction. In the next section we present the mathematical model for V1 that will be the basis for our sound reconstruction algorithm.

\subsection{Neuro-geometric model of V1}\label{angiulinalabella}
The neuro-geometric model of V1 finds its roots in the experimental results of Hubel and Wiesel \cite{hubel-wiesel}, which inspired Hoffman \cite{Hoffman} to model V1 as a \emph{contact space}\footnote{
A $3$ dimensional manifold $M$ becomes a {\em contact space} once it is
endowed with a  smooth map $M\ni q\mapsto {\cal D}(q)$ where  ${\cal D}(q)$
is a a plane in the tangent space $T_q M$ passing from $q$.                                   
 There is an additional requirement on this map. Locally one can always                       
write  ${\cal D}(q)=\operatorname{span}\{X_0(q),X_1(q)  \}$, where $X_0$ and $X_1$
are two smooth vector fields. Then at every point $q$ one should require                      
dim(span$_q\{X_0,X_1,[X_0,X_1]\})=3$. Here $[\cdot,\cdot]$ is the Lie                         
bracket of the vector fields. The main consequence of this condition is                       
that no surface can be tangent to ${\cal D}$ at all points.                                   
                                                                                              
By assigning to every ${\cal D}(q)$ an inner (Euclidean) product that is 
smooth  as function of $q$, we endow $M$ with a {\em sub-Riemannian structure}. The simplest way of defining locally such a structure on a $3$ dimensional manifold is to assign two vector fields $X_0$ and                       
$X_1$ postulating  on one side that ${\cal D}(q)=\operatorname{span}\{X_0(q),X_1(q)  \}$               
(assigning in this way the contact structure) and on the other side that                      
they have norm one and that they are mutually orthogonal (assigning in this                   
way the inner product).                                                                       
                                                                                              
The simplest example of sub-Riemannian structure on $\R^3$ is given by the                    
so called Heisenberg group for which the vector fields $X_0 = (1,\nu,0)^\top$ and
$X_1 = (0,0,1)^\top$  are orthonormal (here we  write coordinates in  $\R^3$ as 
$(\tau,\omega,\nu)$). Such a structure is called Heisenberg group since               
defining $X_2=(0,1,0)^\top$ one has the Lie brackets                       
 $[X_0,X_1]=X_2$,  $[X_0,X_2]=[X_1,X_2]=0$, that  are                                        
 the commutation relations appearing in quantum mechanics.  
}. This model has then been extended to the so-called sub-Riemannian model in \cite{Petitot1999, Citti2006, boscain2010anthropomorphic, Remizov2013}. On the basis of such a model, exceptionally efficient algorithms for image inpainting have been developed (e.g., \cite{remizovJMIV,Duits, Duits2010}). These algorithms have now several medical imaging applications (e.g., \cite{zhang}).

The main idea behind this model is that an image, seen as a function $f:\R^2\to\R_+$ representing the grey level, is lifted to a distribution on $\R^2\times P^1$, the bundle of directions of the plane\footnote{Note that in mathematics, the term “direction” corresponds to what neurophysiologists call “orientation” and viceversa. In this study, we use the mathematical terminology}.
Here, $P^1$ is the projective line, i.e., $P^1 = \R/\pi\mathbb{Z}$. 
More precisely, the lift is given by $Lf(x,y,\theta) = \delta_{Sf}(x,y,\theta)f(x,y)$ where $\delta_{S_f}$ is the Dirac mass supported on the set $S_f\subset \R^2\times P^1$ of points $(x,y,\theta)$ such that $\theta$ is the direction of the tangent line to $f$ at $(x,y)$.
Notice that, under suitable regularity assumptions on $f$, $S_f$ is a surface. 

When $f$ is corrupted (i.e. when $f$ is not defined in some region of the plane), the lift is corrupted as well and the reconstruction is obtained by applying a deeply anisotropic diffusion adapted to the problem.  Such diffusion mimics the flow of information along the horizontal and vertical connections of V1 and uses as an initial condition the surface $S_f$ and the values of the function $f$. Mathematicians call such a diffusion the {\em sub-Riemannian  diffusion} in  $\R^2\times P^1$, cf.\ \cite{Montgomery2002, Agrachev2019}.
One of the main features of this diffusion is that it is invariant by rototranslation of the plane, a feature that will not be possible to translate to the case of sounds, due to the special role of the time variable.

In what follows, we explain how similar ideas could be translated to the problem of sound reconstruction. 

\subsection{From V1 to sound reconstruction}
\begin{figure}
    \centering
    \scalebox{.5}{
        \input{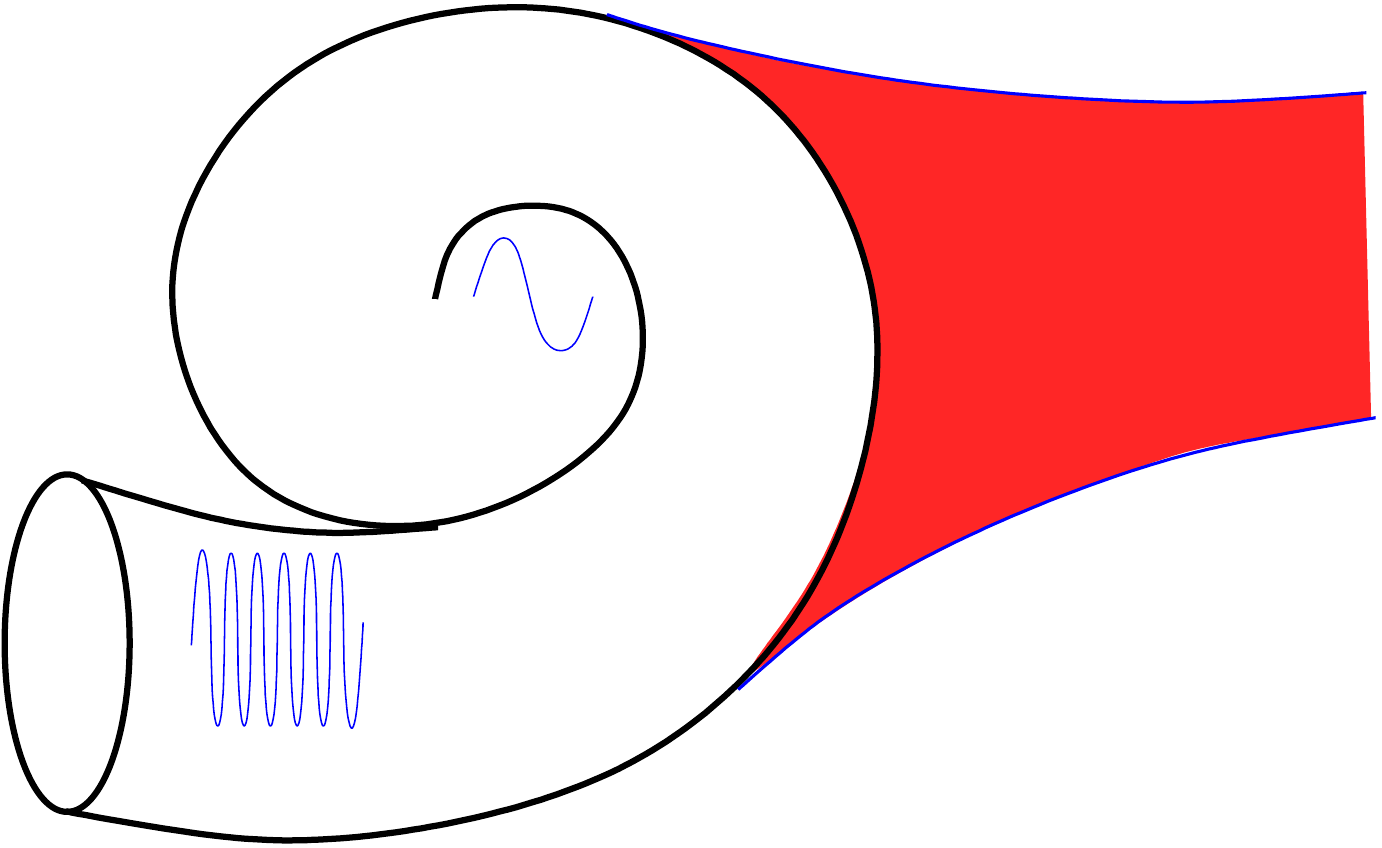_t}
    }
\caption{Perceived pitch of a sound depends on the location in the cochlea that the sound wave stimulated. High-frequency sound waves, which correspond to high-pitched noises, stimulate the basal region of the cochlea. Low-frequency sound waves are targeted to the apical region of the cochlear structure and correspond with low-pitched sounds.}% Image from \cite{Williams2013}.}
  \label{fig:cochlea}
\end{figure}

The sensory input reaching A1 comes directly from the cochlea \cite{InnerEar}: a spiral-shaped, fluid-filled, cavity that composes the inner ear. Vibrations coming from the ossicles in the middle ear are transmitted to the cochlea, where they propagate and are picked up by sensors (so-called hair cells). These sensors are tonotopically organized along the spiral ganglion of the cochlea in a frequency-specific fashion, with cells close to the base of the ganglion being more sensitive to low-frequency sounds and cells near the apex more sensitive to high-frequency sounds, see Figure~\ref{fig:cochlea}. This early ‘spectrogram’ of the signal is then transmitted to higher-order layers of the auditory cortex. 

Mathematically speaking, this means that when we hear a sound (that we can think as represented by a function $s:[0,T]\to\R$) our primary auditory cortex A1 is fed by its time-frequency representation\footnote{Actually, its spectrogram $|S|:[0,T]\times \R\to [0,+\infty)$, see Remark~\ref{rmk:spectrogram}.} $S:[0,T]\times\R\to \C$. If, say, $s\in L^2(\R^2)$ the time-frequency representation $S$ is given by the short-time Fourier transform of $s$, defined as
\begin{equation}
  S(\tau, \omega) := \operatorname{STFT}(s)(\tau,\omega)= \int_{\R} s(t)W(\tau-t)e^{2\pi i t\omega}\,dt.
\end{equation}
Here, $W:\R\to [0,1]$ is a compactly supported (smooth) window, so that $S\in L^2(\R^2)$.
Since $S$ is complex-valued, it can be thought as the collection of two black-and-white images: $|S|$ and $\arg S$.
The function $S$ depends on two variables: the first one is time, that here we indicate with the letter $\tau$, and the second one is frequency, denoted by $\omega$. Roughly speaking, $|S(\tau,\omega)|$ represents the strength of the presence of the frequency $\omega$ at time $\tau$. In the following, we call $S$ the sound image (see Figure~\ref{fig:stft}). 
% Here we consider positive and negative frequencies to take into account the contributions in sines and cosines of the Fourier transform. Later on, to use standard Fourier analysis, we are going to think to $S$ as a function from $[0,T]\times \R$ taking values in $\C$.

\begin{figure}
    \centering
  \includegraphics[width = .4\textwidth]{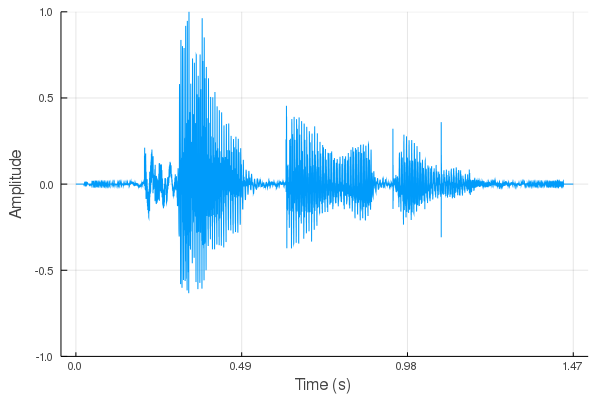}
  \includegraphics[width = .4\textwidth]{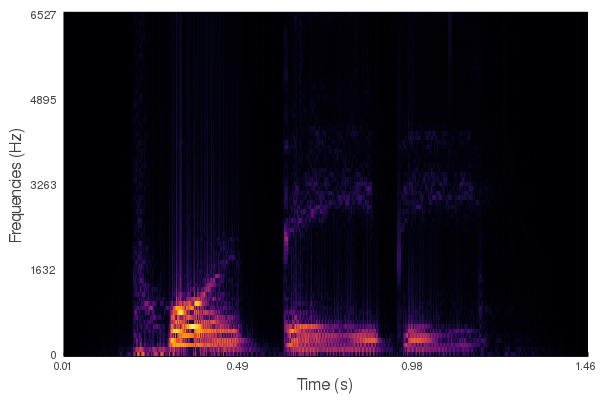}
  \caption{\emph{Left.} Sound signal. \emph{Right.} The corresponding short-time Fourier transform.}
  \label{fig:stft}
\end{figure}

A first attempt to model the process of sound reconstruction into A1 is to apply the algorithm for image reconstruction described in Section \ref{angiulinalabella}. 
In a sound image, however, time plays a special role. Indeed:
\begin{enumerate}
	\item While for images the reconstruction can be done by evolving the whole image simultaneously, the whole sound image does not reach the auditory cortex simultaneously, but sequentially. Hence, the reconstruction can be performed only in a sliding window.
	\item A rotated sound image corresponds to a completely different input sound and thus the invariance by rototranslations is lost.
\end{enumerate} 
\noindent As a consequence, different symmetries have to be taken into account (see Appendix~\ref{a:heisenberg}) and a different model for both the lift and the processing in the lifted space is required. 

In order to introduce this model, let us recall that, in V1, neural stimulation stems not only from the input but also from its variations. That is, mathematically speaking, the input image is considered as a real valued function on a 2-dimensional space, and the orientation sensitivity arises from the sensitivity to a first order derivative information on this function, i.e., the tangent directions to level lines. 
This additional variational information allows to lift the 2-dimensional image space to the aforementioned contact space, and to define the sub-Riemannian diffusion \cite{Agrachev2019,Bramanti2014}.

In our model of A1, we follow the same idea: we consider the variations of the input as additional variables. Input sound signals are time dependent real-valued functions subjected to a short time Fourier transform by the cochlea. As a result the A1 input is considered as a function of time and frequency. The first time derivative $\nu=d\omega/d\tau$ of this object, corresponding to the instantaneous chirpiness of the sound, allows to add a supplementary dimension to the domain of the input. As in the case of V1, this gives rise to a natural lift of the signal to an \emph{augmented space}, which in this case turns out to be $\mathbb{R}^3$ with the Heisenberg group structure. (This structure very often appears in signal processing, see for instance \cite{Karlheinz2001} and Appendix~\ref{a:heisenberg}.) 

As we already mentioned, the special role played by time in sound signals does not permit to model the flow of information as a pure hypoelliptic diffusion, as was done for static images in V1. We thus turn to a different kind of model: Wilson-Cowan equations \cite{Wilson1972}. Such a model, based on an integro-differential equation, has been successfully applied to describe the evolution of neural activations.  In particular, it allowed to theoretically predict complex perceptual phenomena in V1, such as the emergence of hallucinatory patterns \cite{Ermentrout, Bressloff01}, and has been used in various computational models of the auditory cortex \cite{Loebel2007, Rankin2015, Zulfiqar2020}. Recently, these equations have been coupled with the neuro-geometric model of V1 to great benefit. For instance, in \cite{SSVM2019, JNP2019, BertalmioCalatroniEtAl2019} they allowed to replicate orientation-dependent brightness illusory phenomena, which had proved to be difficult to implement for non-cortical-inspired models. See also \cite{Sarti2014}, for applications to the detection of perceptual units.

	On top of these positive results, Wilson-Cowan equations present many advantages from the point of view of A1 modelling: i) they can be applied independently of the underlying structure, which is only encoded in the kernel of the integral term; ii) they allow for a natural implementation of delay terms in the interactions; iii) they can be easily tuned via few parameters with a clear effect on the results.
On the basis of these positive results, we emulate this approach in the A1 context.
Namely, we will consider the lifted sound image $I(\tau,\freq,\slope)$ to yield an A1 activation $a(\tau,\freq,\slope)$ via the following Wilson-Cowan equations:
\begin{multline}\tag{WC}\label{eq:WC}
        \partial_t a(t,\freq,\acc) = {-\alpha a(t,\freq,\acc)}+{\beta I(t,\freq,\acc)}\\
        +{\gamma \int_{\mathbb R^2} k_\delta(\freq,\acc\|\freq',\acc')\sigma(a(t-\delta, \freq',\acc'))\,d\freq'\,d\acc'}.
\end{multline}
Here, $(t,\omega,\nu)$ are coordinates on the augmented space corresponding to time, frequency, and chirpiness, respectively; $\alpha, \beta, \gamma >0$ are parameters; $\sigma:\C\to \C$ is a non-linear sigmoid; $k_\delta(\freq,\acc\|\freq',\acc')$ is a weight modelling the interaction between $(\freq,\acc)$ and $(\freq',\acc')$ after a delay of $\delta>0$.
%via the kernel of the Kolmogorov operator (i.e., taking into account the contact structure of A1); 
The presence of this delay term models the fact that the time-scale of the input signal and of the neuronal activation are comparable.
%Wilson-Cowan equations with delay have been applied, e.g., to feedback stabilisation  of deep-brain stimulation, cf.\ \cite{CHAILLET2017262}.

The proposed algorithm to process a sound signal $s:[0,T]\to \R$, is the following:
\begin{enumerate}
  \item[A.] \textbf{Preprocessing:} 
  \begin{enumerate}
    \item Compute the time-frequency representation $S:[0,T]\times \R\to \C$ of $s$, via standard short time Fourier transform (STFT);
    \item Lift this representation to the Heisenberg group, which encodes redundant information about chirpiness, obtaining $I:[0,T]\times\R\times\R\to \C$ (see Section~\ref{sec:lift} for details);
  \end{enumerate}
  \item[B.] \textbf{Processing:} Process the lifted representation $I$ via a Wilson-Cowan equations adapted to the Heisenberg structure, obtaining $a:[0,T]\times\R\times\R\to \C$.
  \item[C.] \textbf{Postprocessing:} Project $a$ to the processed time-frequency representation $\hat S:[0,T]\times \mathbb R\to \mathbb C$ and then apply an inverse STFT to obtain the resulting sound signal $\hat s:[0,T]\to \R$.
\end{enumerate}

\begin{remark}
  All the above operations can be performed in real-time, as they only require the knowledge of the sound on a short window $[t-\delta,t+\delta]$.
\end{remark}

\begin{remark}\label{rmk:spectrogram}
    Notice that in the presented algorithm we are assuming neural activations to be complex-valued functions,  due to the use of the STFT. This is inconsistent with neural modelling, as it is known that the cochlea sends to A1 only the spectrogram of the STFT (that is $|S|$), see \cite{sethares2005tuning}.
    If striving for a biologically plausible description, one can easily modify the above algorithm in this direction (i.e., by computing the lifted representation $I$ starting from $|S|$ instead than $S$). However, during the post-processing phase, in order to invert the STFT and obtain an audible signal, one then needs to reconstruct the missing phase information via heuristic algorithms. See, for instance \cite{Fienup1982}.
\end{remark}

\subsection{Structure of the paper}

In Section~\ref{sec:reconstruction}, we present the reconstruction model. We first present the lift procedure of a sound signal to a function on the augmented space, and then introduce the Wilson-Cowan equations modelling the cortical stimulus. In Section~\ref{sec:numerics}, we describe the numerical implementation of the algorithm, together with some of its  crucial properties. 
This implementation is then tested in Section~\ref{s:experiments}, were we show the results of the algorithm on some simple synthetic signals. Such numerical examples can be listened at \url{www.github.com/dprn/WCA1}, and should be considered as a very preliminary step toward the construction of an efficient cortical-inspired algorithm for sound reconstruction. 
Finally, in Appendix~\ref{a:heisenberg}, we show how the proposed algorithm preserves the natural symmetries of sound signals.

\section{The reconstruction model}
\label{sec:reconstruction}

As discussed in the introduction, the cochlea decomposes the input sound $s:[0,T]\to \R$ in its time-frequency representation $S:[0,T]\times \R \to \C$, obtained via a short-time Fourier transform (STFT). This corresponds to interpret the ``instantaneous sound'' at time $\tau\in[0,T]$, instead of as a sound level $s(\tau)\in \R$, as a function $\freq\mapsto S(\tau,\freq)$ which encodes the instantaneous presence of each given frequency, with phase information.

\subsection{The lift to the augmented space}
\label{sec:lift}

In this section, we present an extension of the time-frequency representation of a sound, which is at the core of the proposed algorithm. Roughly speaking, the instantaneous sound will be represented as a function $(\freq, \slope)\mapsto I(\tau,\freq,\slope)$, encoding the presence of both the frequency and the chirpiness
$\slope = {d\freq}/{d\tau}$.

Assume for the moment that the sound has a single time-varying frequency, e.g.,
%a single (positive) time varying frequency, e.g.,
\begin{equation}\label{eq:sin-omega}
    s(\tau)=A \sin(\freq(\tau)\tau), \qquad A\in\R.
\end{equation}
If the frequency is varying slowly enough and the window of the STFT is large enough, its sound image (up to the choice of normalisations constants in the Fourier transform) coincides roughly with
\begin{equation}
    S(\tau,\freq) \sim \frac{A}{2i} \bigg( \delta_0(\freq-\freq(\tau))-\delta_0(\freq+\freq(\tau)) \bigg),
\end{equation}
where $\delta_0$ is the Dirac delta distribution centered at $0$. That is, $S$ is concentrated on the two curves $\tau\mapsto (\tau,\freq(\tau))$ and $\tau\mapsto (\tau,-\freq(\tau))$, see Figure~\ref{fig:single-freq}. Let us focus only on the first curve.

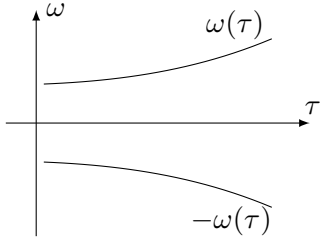
\begin{figure}
    \centering

\begin{tikzpicture}[x=1.cm,y=1.cm]
\draw [domain=1:4] plot (\x,\x*\x/100+\x*\x*\x/140+.5);
\draw [domain=1:4] plot (\x,-\x*\x/100-\x*\x*\x/140-.5);
\draw[-latex] (0.5,0) -- (4.5,0);
\draw[-latex] (.9,-1.5) -- (.9,1.5);
\draw (3,1.3) node[right] {$\omega(\tau)$};
\draw (2.8,-1.3) node[right] {$-\omega(\tau)$};
\draw (4.3,.2) node[right] {$\tau$};
\draw (.9,1.5) node[right] {$\omega$};
\end{tikzpicture}

\caption{Short-time Fourier transform of the signal in \eqref{eq:sin-omega}, for a positive and increasing $\omega(\cdot)$.}
\label{fig:single-freq}
\end{figure}

Because of the sensitivity to variations of the input, as discussed in Section~\ref{s:intro}, the curve $\freq(\tau)$ is lifted in a bigger space by adding a new variable $\nu=d\freq/d\tau$.  In mathematical terms the 3-dimensional space $(\tau,\omega,\nu)$ is called the \emph{augmented space}. It will be the basis for the geometric model of A1 that we are going to present.
 
Up to now the curve $\omega(\tau)$ was parameterized by one of the coordinates of the contact space (the variable $\tau$), but it will be more convenient to consider it as a parametric curve in the space $(\tau,\omega)$. More precisely, the original curve $\omega(\tau)$ is represented in the space $(\tau,\omega)$ as $t\mapsto(t,\omega(t))$ (thus imposing $\tau=t$). Similarly, the lifted curve is parameterized as $t\mapsto(t,\omega(t),\nu(t))$. To every regular enough curve $t\mapsto(t,\omega(t))$, one can associate a lift $t\mapsto(t,\omega(t),\nu(t))$ in the contact space simply by computing $\nu(t)=d\omega/dt$. Vice-versa, a regular enough curve in the contact space 
$t\mapsto(\tau(t),\omega(t),\nu(t))$ is a lift of planar curve $t\mapsto (t,\omega(t))$ if $\tau(t)=t$  and if $\nu(t)=d\omega/dt$. Now, defining $u(t)=d\nu/dt$  we can say that a curve in the contact space $t\mapsto(\tau(t),\omega(t),\nu(t))$ is a lift of a planar curve if there exists a function $u(t)$   such that:
\begin{equation}\label{eq:heis}%\tag{$\Sigma$}
  \frac{d}{dt}
 \begin{pmatrix}
    \tau \\ \freq \\ \acc 
 \end{pmatrix}
  =
	\begin{pmatrix}
	    1 \\ \acc \\ 0
\end{pmatrix}
  +u(t)
	\begin{pmatrix}
	0 \\ 0 \\ 1
	\end{pmatrix}.
\end{equation}
Letting $q=(\tau,\omega,\nu)$, equation \eqref{eq:heis} can be equivalently written as the control system
\begin{equation}\label{eq:heis-vf}
\frac{d}{dt}q(t)=X_0(q(t))+u(t) X_1(q(t)),
\end{equation}
where the $X_0$ and $X_1$ are the two vector fields in $\R^3$
\begin{equation}
X_0=
  \begin{pmatrix}
    1 \\ \acc \\ 0
  \end{pmatrix}, ~~~X_1=
  \begin{pmatrix}
    0 \\ 0 \\ 1
  \end{pmatrix}.
\end{equation}
Notice that the two vector fields appearing in this formula generate the Heisenberg group. However, we are not dealing here with a sub-Riemannian structure, since the space $\{X_0+uX_1\mid u\in\R\}$ is a line and not a  plane. (One would get a plane by considering two controls, namely $\{u_0X_0+u_1X_1\mid(u_0,u_1)\in\R^2\}$.)     
% Notice that defining $X_2=\left(
%   0,1,0
%   \right)^\top$, we have the commutation relations (here $[\cdot,\cdot]$ indicates the Lie Bracket) $[X_0,X_1]=X_2$, $[X_0,X_2]=[X_1,X_2]=0$. Since these are precisely the commutation relations appearing in quantum mechanics, one says that $\{X_0,X_1\}$ generates the Heisenberg algebra. The corresponding Lie group structure on $\R^3$ is called the Heisenberg group.

Following \cite{boscain2010anthropomorphic}, when $s$ is a general sound signal, we lift each level line of $|S|$. By the implicit function theorem, this yields the following subset of the contact space:
\begin{equation}\label{eq:Sigma}
    \Sigma = 
    \left\{ (\tau,\freq,\slope)\in\R^3\mid \slope \partial_\freq |S|(\tau,\freq)+\partial_\tau|S|(\tau,\freq) =0\right\}.
\end{equation}
If $|S|\in C^2$ and $\operatorname{Hess}|S|$ is non-degenerate, the set $\Sigma$ is indeed a surface.
Finally, the external input from the cochlea is  given by
\begin{equation}\label{eq:lift-distr}
    I(\tau,\freq,\slope) = S(\tau,\freq)\delta_{\Sigma}(\tau,\freq,\slope).
\end{equation}
Here, $\delta_\Sigma$ denotes the Dirac delta distribution concentrated at $\Sigma$. The presence of this distributional term is necessary for a well-defined solution to the evolution equation \eqref{eq:WC}. Such en equation is introduced in the next section.

\subsection{Cortical activations in A1}\label{s:WC}

On the basis of what described in the previous section and the well-known tonotopical organization of A1 (cf.\ Section~\ref{s:intro}), we propose to consider A1 to be the space of $(\freq,\acc)\in \R^2$. When hearing a sound $s(\cdot)$, the external input fed to A1 at time $t>0$ is then given as the slice at $\tau=t$ of the lift $I$ of $s$ to the contact space.
That is, hearing an ``instantaneous sound level'' $s(t)$ reflects in the external input $I(t,\freq,\acc)$ to the ``neuron'' $(\freq,\acc)$ in A1 as follows: The ``neuron'' receives an external charge $S(t,\omega)$ if $(t,\omega,\nu)\in \Sigma$, and no charge otherwise, where $\Sigma$ is defined in \eqref{eq:Sigma}.

We model the neuronal activation induced by the external stimulus $I$ by adapting to this setting the well-known Wilson-Cowan equations. These equations are widely used and proved to be very effective in the study of V1 \cite{Wilson1972, Bressloff01}.
According to this framework, the resulting activation $a:[0,T]\times \R\times \R\to \C$ is the solution of the following equation with delay $\delta>0$:
\begin{equation}\label{eq:WC1}%\tag{WC}
        \partial_t a(t,\freq,\acc) = {-\alpha a(t,\freq,\acc)}+{\beta I(t,\freq,\acc)}
        +{\gamma \int_{\mathbb R^2} k_\delta(\freq,\acc\|\freq',\acc')\sigma(a(t-\delta, \freq',\acc'))\,d\freq'\,d\acc'},
\end{equation}
with initial condition $a(t,\cdot,\cdot)\equiv 0$ for $t\le 0$.
Here, $\alpha,\beta, \gamma>0$ are parameters, $k_\delta$ is an interaction kernel, and $\sigma:\C\to \C$ is a (non-linear) saturation function, or sigmoid. In the following, we let $\sigma(\rho e^{i\theta})=\tilde\sigma(\rho) e^{i\theta}$ where $\tilde \sigma(x) = \min\{1,\max\{0,\kappa x\}\}$, $x\in\R$, for some fixed $\kappa>0$. The fact that the non-linearity $\sigma$ does not act on the phase is one of the key ingredients in proving that this processing preserves the natural symmetries of sound signals, see Proposition~\ref{prop:symmetry} in Appendix~\ref{a:heisenberg}.

When $\gamma = 0$, equation \eqref{eq:WC1} becomes the standard low-pass filter $\partial_t a = -\alpha a + I$, whose solution is the convolution of the input signal $I$ with the function
\begin{equation}
    \varphi(t) = 
    \begin{cases}
        e^{-t\alpha} & \quad\text{if } t>0,\\
        0 &\quad\text{otherwise}.
    \end{cases}
\end{equation}
Setting $\gamma\neq 0$ adds a non-linear delayed interaction term on top of this exponential smoothing, encoding the inhibitory and excitatory interconnections between neurons. Next section is devoted to the choice of the integral kernel $k_\delta$.

\begin{remark}
    In \eqref{eq:WC1} we chose to consider a simple form for the interaction term.
%    The form we chose for the interaction term is clearly a simplification. 
    A more precise choice would indeed need to take into account the whole history of the process, for example by considering
    \begin{equation}
        \int_{\tau}^{+\infty} e^{-\varrho(s-\tau)} \int_{\R^2}k_{s}(\freq,\acc\|\freq',\acc') \sigma(a(t-s,\freq',\acc'))\,d\freq'\,d\acc'\,ds,
        \qquad\varrho>0.
    \end{equation}
\end{remark}

\subsection{The neuronal interaction kernel}\label{s:tonotopical}

Considering A1 as a slice of the augmented space allows to deduce a natural structure for neuron connections as follows. Going back to a sound composed by a single time-varying frequency $t\mapsto\omega(t)$, we have that its lift is concentrated on the curve $t\mapsto (\omega(t),\nu(t))$, such that
\begin{equation}\label{eq:contr-sys}
    \frac{d}{dt}\begin{pmatrix}
         \omega  \\
         \nu
    \end{pmatrix} 
    =
    Y_0(\omega,\nu) + u(t) Y_1(\omega,\nu),
\end{equation}
where $Y_0(\omega,\nu) = (\nu,0)^\top$, $Y_1(\omega,\nu) = (0,1)^\top$, and $u:[0,T]\to \R$.

    As in the case of V1 \cite{Remizov2013}, we model neuronal connections via these dynamics.
In practice, this amounts to assume that the excitation starting at a neuron $X_0=(\freq',\acc')$ evolves as the stochastic process $\{A_t\}_{t\ge 0}$ naturally associated with \eqref{eq:contr-sys}. This is given by the following stochastic differential equation,
    \begin{equation}\label{eq:stochastic}
        dA_t = Y_0(A_t) dt + Y_1(A_t) dW_T, \qquad A_0 = (\freq',\acc'),
    \end{equation}
    where $\{W_t\}_{t\ge 0}$ is a Wiener process. 
    The generator of $\{A_t\}_{t\ge 0}$ is the second order differential operator 
    \begin{equation}
        \mathcal L = Y_0 +(Y_1)^2 =\acc \partial_\freq + b\partial_\acc^2.
    \end{equation}
    In this formula, the vector fields $Y_0$ and $Y_1$ are interpreted as first-order differential operators.
    Moreover, we added a scaling parameter $b>0$, modelling the relative strength of the two terms. 

    It is natural to model the influence $k_\delta(\freq,\acc\|\freq',\acc')$ of neuron $(\freq',\acc' )$ on neuron $(\freq,\acc)$ at time $\delta>0$ as the transition density of the process $\{A_t\}_{t\ge0}$. It is well-known that such transition density is obtained by computing the integral kernel at time $\delta$ of the Fokker-Planck equation corresponding to \eqref{eq:stochastic}, that reads
    \begin{equation}
        \label{eq:kolmo}
	\partial_t I = \mathcal L^* I , 
	\quad\text{where}\quad
	\mathcal L^* =- Y_0+(Y_1)^2 = -\acc \partial_\freq + b\partial_\acc^2.
    \end{equation}

    The existence of an integral kernel for \eqref{eq:kolmo} is a consequence of the hypoellipticity\footnote{That is, if $f$ is a distribution defined on an open set $\Omega$ and such that $(\partial_t-\mathcal L^*)f\in C^\infty(\Omega)$, then $f\in C^\infty(\Omega)$.} of $(\partial_t - \mathcal L^*)$. The explicit expression of $k_\delta$ is well-known and we recall it in the following result, proved in Appendix~\ref{a:kernel}.

\begin{proposition}\label{p:expl-kernel}
	The integral kernel of equation \eqref{eq:kolmo} is
	\begin{equation}\label{eq:kernel}
	k_\delta(\freq,\nu\|\freq',\nu')=\frac{\sqrt{3}}{2 \pi b  \delta^2} \exp \left(-\frac{g_\delta(\freq,\acc\|\freq',\acc')}{
   b \delta^3}\right),
\end{equation}
where,
\begin{equation}
   g_\delta(\freq,\acc\|\freq',\acc') = 3
   (\freq-\freq')^2- 3
   \delta (\freq-\freq')
   (\nu+\nu')+ \delta^2
   \left(\nu^2+\nu
   \nu'+\nu'^2\right).
\end{equation}
\end{proposition}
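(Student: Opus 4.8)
The plan is to identify the operator $\mathcal L^*$ in \eqref{eq:kolmo} as the forward (Fokker--Planck) generator of an explicit linear diffusion and to read off $k_\delta$ as its transition density, which will turn out to be Gaussian. Concretely, $\mathcal L^* = -\nu\partial_\omega + b\partial_\nu^2$ is the forward generator of the It\^o system $d\omega_t = \nu_t\,dt$, $d\nu_t = \sqrt{2b}\,dW_t$, whose backward generator is precisely $\mathcal L = \nu\partial_\omega + b\partial_\nu^2$, started at $(\omega_0,\nu_0) = (\omega',\nu')$ as in \eqref{eq:stochastic}. Since $(\partial_t-\mathcal L^*)$ is hypoelliptic (as already noted before the statement), its fundamental solution coincides with the transition density of this process, so it suffices to compute the law of $(\omega_\delta,\nu_\delta)$.

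First I would solve the system explicitly: $\nu_\delta = \nu' + \sqrt{2b}\,W_\delta$ and $\omega_\delta = \omega' + \nu'\delta + \sqrt{2b}\int_0^\delta W_s\,ds$. Being an affine functional of a Gaussian process, $(\omega_\delta,\nu_\delta)$ is jointly Gaussian, hence determined by its mean and covariance. The mean is $(\omega' + \nu'\delta,\,\nu')$. For the covariance $C$, I would use the standard second moments of iterated Brownian integrals, $\operatorname{Var}(W_\delta) = \delta$, $\operatorname{Var}(\int_0^\delta W_s\,ds) = \delta^3/3$, and $\operatorname{Cov}(\int_0^\delta W_s\,ds,\, W_\delta) = \delta^2/2$, which after multiplying by the noise intensity yield $\operatorname{Var}(\nu_\delta) = 2b\delta$, $\operatorname{Var}(\omega_\delta) = 2b\delta^3/3$ and $\operatorname{Cov}(\omega_\delta,\nu_\delta) = b\delta^2$. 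A direct computation then gives $\det C = b^2\delta^4/3$, so the Gaussian normalisation $1/(2\pi\sqrt{\det C})$ reproduces exactly the prefactor $\sqrt 3/(2\pi b\delta^2)$ in \eqref{eq:kernel}.

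It then remains to match the exponent. I would invert $C$, form the quadratic expression $\tfrac12(x-m)^\top C^{-1}(x-m)$ with $x = (\omega,\nu)$ and $m = (\omega'+\nu'\delta,\nu')$, and verify that it equals $g_\delta(\omega,\nu\|\omega',\nu')/(b\delta^3)$. This is the only genuine computation, and it is where I expect the friction to lie: one must carry the drift shift $\nu'\delta$ inside the $\omega$-marginal correctly and then check the algebraic identity (equivalently, a completion of squares) that turns the shifted quadratic form into the symmetric expression defining $g_\delta$. As an independent check---and as a self-contained alternative avoiding probabilistic prerequisites---one can instead Fourier transform \eqref{eq:kolmo} in $(\omega,\nu)$, which sends $-\nu\partial_\omega$ to a transport term coupling the dual variables and reduces the equation to a first-order linear PDE solvable by characteristics; inverting the resulting Gaussian in the dual variable gives \eqref{eq:kernel} directly. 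Either way, once the mean, the covariance, and the quadratic identity are established, substituting into the bivariate Gaussian density completes the proof.
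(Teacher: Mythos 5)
Your proposal is correct, and the two arguments are really the same Gaussian computation packaged differently. The paper's proof invokes a ready-made formula \cite[Proposition~9]{Barilari2017} for the heat kernel of a linear hypoelliptic (Kolmogorov/Ornstein--Uhlenbeck) operator, which expresses $k_\delta$ as a Gaussian with mean $\e^{\delta A}x$ and covariance given by the controllability Gramian $D_\delta$, and then reduces the proof to matrix algebra ($\det D_\delta = b^2\delta^4/3$ and $\tfrac12 D_\delta^{-1}=\tfrac{1}{b\delta^3}M$). You instead re-derive that formula from scratch: solving the SDE gives $\nu_\delta=\nu'+\sqrt{2b}\,W_\delta$ and $\omega_\delta=\omega'+\nu'\delta+\sqrt{2b}\int_0^\delta W_s\,ds$, and the second moments of $(W_\delta,\int_0^\delta W_s\,ds)$ reproduce exactly the Gramian the paper computes, with the same determinant $b^2\delta^4/3$ and hence the same prefactor. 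The step you flag as the potential friction point does go through: with $u=\omega-\omega'-\nu'\delta$ and $v=\nu-\nu'$ one gets $\tfrac12(x-m)^\top C^{-1}(x-m)=\bigl(3u^2-3\delta uv+\delta^2 v^2\bigr)/(b\delta^3)$, and expanding yields precisely $3(\omega-\omega')^2-3\delta(\omega-\omega')(\nu+\nu')+\delta^2(\nu^2+\nu\nu'+\nu'^2)$, i.e.\ $g_\delta$; note in particular that the asymmetric-looking shift by $\nu'\delta$ in the mean of $\omega_\delta$ resolves into the symmetric combination $\nu+\nu'$ in the cross term, consistent with the paper's $(x'-\e^{\delta A}x)^\top M(x'-\e^{\delta A}x)$. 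What your route buys is self-containedness (no external citation, only elementary It\^o calculus and moments of iterated Brownian integrals); what the paper's route buys is brevity and a form that generalises immediately to any linear drift. Your Fourier-transform/characteristics alternative is also standard and would work, but is not needed once the probabilistic computation is done.
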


\section{Numerical implementation}\label{sec:numerics}

For the numerical experiments, we chose to implement the proposed algorithm in Julia \cite{bezanson2017julia}. As already presented, this process consists in a preprocessing phase, in which we build an input function $I$ on the 3D contact space, a main part, where $I$ is used as the input of the Wilson-Cowan  equation \eqref{eq:WC}, and a post-processing phase, where the reconstructed sound is recovered from the result of the first part. 

In the following, we present these phases separately.

\subsection{Pre-processing} The input sound $s$ is lifted to a time-frequency representation $S$ via a classical implementation of STFT, i.e., by performing FFTs of a windowed discretised input.
In the proposed implementation we chose to use a standard Hann  window (see, e.g., \cite{Press2007}):
\begin{equation}
    W(x) =
	\begin{cases}
		\displaystyle\frac{1+\cos(2\pi x/L)}{2}&\text{ if } |x|< L/2,\\
		0 &\text{ otherwise.}
	\end{cases}
\end{equation}
The resulting time-frequency signal is then lifted to the contact space through an approximate computation of the gradient $\nabla |S|$ and the following discretisation of \eqref{eq:lift-distr}:
\begin{equation}\label{eq:disc-chirp}
    I(\tau, \freq,\acc) = 
    \begin{cases}
        S(\tau,\freq)& \text{if }\nu\partial_\freq|S|(\tau,\freq) =-\partial_\tau|S|(\tau,\freq),\\
        0 &\text{otherwise}.
    \end{cases}
\end{equation}

{
\paragraph*{Discretisation issues}

While the discretisation of the time and frequency domains is a well understood problem, dealing with the additional chirpiness variable requires some care. Indeed, even if we assume that the significant frequencies of the input sound $s$ belong to a bounded interval $\Lambda\subset \R$, in general the set $\{ \nu\in \R \mid I(\tau,\omega,\nu)\neq 0\}$ is unbounded. Indeed, one can check that as $(\tau,\freq)$ moves to a point where the countour lines of $|S|$ become vertical, the set of chirpinesses $\nu$'s such that $\nu\partial_\freq|S|(\tau,\freq) =-\partial_\tau|S|(\tau,\freq)$ will converge to $\pm \infty$.

In the numerical implementation we chose to restrict the admissible chirpinesses to a bounded interval $N\subset \R$. This set is chosen in a case by case fashion in order to contain the relevant slopes for the examples under consideration. Work is ongoing to automate this procedure.

% obtained as follows. Let $\varrho:\R \to [0,+\infty)$ be the chirpiness density function, which associates to $\nu\in\R$ the measure $\varrho(\nu)$ of the set of those $(\tau, \freq)\in [0,T]\times \Lambda$  such that $\nu\partial_\freq|S|(\tau,\freq) =-\partial_\tau|S|(\tau,\freq)$. Denote by $a$ and $\sigma$ its expectation and standard deviation, respectively. Then, $N = [e-k \sigma, e+k\sigma]$ where $k\in\R$ is given by
% \begin{equation}
%   k = \arg\min\left\{\kappa>0 \mid \frac{\int_{e-\kappa\sigma}^{e+\kappa\sigma} \varrho(\nu)\,d\nu}{\int_{\R} \varrho(\nu)\,d\nu}>p\right\}.
% \end{equation}
% Here $p\in [0,1]$ is some tolerance, that in our numerical tests has been taken as $p=0.95$.
}

\subsection{Processing}
Equation \eqref{eq:WC} can be solved via a standard forward Euler method. Hence, the delicate part of the numerical implementation is the computation of the interaction term. 

As is clear from the explicit expression given in Proposition~\ref{p:expl-kernel}, $k_\delta$ is not a convolution kernel. That is, $k_\delta(\freq,\acc\|\freq',\acc')$ cannot be expressed as a function of $(\freq-\freq',\acc-\acc')$. 
As a consequence, a priori we need to explicitly compute all values $k_\delta(\freq,\acc\|\freq',\acc')$ for  $(\freq,\acc)$ and $(\freq',\acc')$ in the considered domain. As  is customary, in order to reduce computation times, we fix a threshold $\varepsilon>0$ and for any given $(\freq,\acc)$ we compute only values for $(\freq',\acc')$ in the compact set
\begin{equation}
	\mathrm K_\delta^\varepsilon(\freq,\acc) = \{ (\freq',\acc')\mid  k_\delta(\freq,\acc\|\freq',\acc')\ge \varepsilon\}.
\end{equation}
The structure of $\mathrm K_\delta^\varepsilon(\freq,\acc)$ is given in the following, whose proof we defer to Appendix~\ref{a:kernel}.

\begin{proposition}\label{p:threshold}
For any $\varepsilon>0$ and $(\freq,\acc)\in \R^2$, we have that $\mathrm K_\delta^\varepsilon(\freq,\acc)$ is the set of those $(\freq',\acc')\in \R^2$ that satisfy
	\begin{gather}
		|\acc-\acc'|^2\le C_\varepsilon:=-4b\delta\log\left( \frac{2\pi b\tau^2}{\sqrt 3}\varepsilon\right),\\
		\left|\freq' -\freq+\frac{\delta(\acc + \acc')}{2}\right| 
		\le \frac{\delta}{2\sqrt{3}}\sqrt{C_\varepsilon-|\acc-\acc'|^2}.
	\end{gather}
\end{proposition}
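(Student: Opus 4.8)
The plan is to start from the explicit kernel formula given in Proposition~\ref{p:expl-kernel} and simply unwind the threshold condition $k_\delta(\freq,\acc\|\freq',\acc')\ge\varepsilon$ into an equivalent constraint on $g_\delta$. Since $k_\delta = \frac{\sqrt 3}{2\pi b\delta^2}\exp(-g_\delta/(b\delta^3))$ and the exponential is monotone, the inequality $k_\delta\ge\varepsilon$ is equivalent to
\begin{equation}
   g_\delta(\freq,\acc\|\freq',\acc') \le -b\delta^3\log\!\left(\frac{2\pi b\delta^2}{\sqrt 3}\,\varepsilon\right).
\end{equation}
So the whole proposition reduces to understanding the sublevel sets of the quadratic form $g_\delta$. (I note in passing that the statement as printed writes $\tau^2$ inside the logarithm for $C_\varepsilon$; from the kernel normalisation this should read $\delta^2$, and I would silently use $\delta^2$ throughout.)

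Next I would treat $g_\delta$ as a quadratic polynomial and complete the square in the variable $\freq'$, viewing $\acc,\acc',\freq$ as fixed. Writing $g_\delta = 3(\freq-\freq')^2 - 3\delta(\freq-\freq')(\acc+\acc') + \delta^2(\acc^2+\acc\acc'+\acc'^2)$, the terms involving $\freq-\freq'$ form $3\bigl[(\freq-\freq')^2 - \delta(\freq-\freq')(\acc+\acc')\bigr]$, which completes to $3\bigl(\freq-\freq'-\tfrac{\delta(\acc+\acc')}{2}\bigr)^2 - \tfrac{3\delta^2(\acc+\acc')^2}{4}$. Collecting the remaining $\acc$-terms, I expect the leftover pure-$\acc$ part to simplify: $-\tfrac34\delta^2(\acc+\acc')^2 + \delta^2(\acc^2+\acc\acc'+\acc'^2)$ should collapse to $\tfrac{\delta^2}{4}(\acc-\acc')^2$. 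Thus
\begin{equation}
   g_\delta = 3\left(\freq-\freq'-\frac{\delta(\acc+\acc')}{2}\right)^{\!2} + \frac{\delta^2}{4}(\acc-\acc')^2.
\end{equation}

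With $g_\delta$ in this diagonalised form, the threshold inequality becomes a sum of two non-negative squares bounded above by $\frac14 b\delta^2 C_\varepsilon$, where $C_\varepsilon := -4b\delta\log\bigl(\tfrac{2\pi b\delta^2}{\sqrt3}\varepsilon\bigr)$ is exactly the constant in the statement. Dividing through by $\tfrac{\delta^2}{4}$ turns $\frac{\delta^2}{4}(\acc-\acc')^2 \le \frac14 b\delta^2 C_\varepsilon$ into $(\acc-\acc')^2\le bC_\varepsilon$—here I should double-check the numerical factor, since the printed first inequality reads $|\acc-\acc'|^2\le C_\varepsilon$ rather than $bC_\varepsilon$, suggesting the intended $C_\varepsilon$ already absorbs a factor of $b$; I would reconcile this bookkeeping so that the two displayed inequalities match. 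The second inequality then follows immediately: once the $\acc$-part uses up some of the budget, the remaining budget bounds the $\freq'$-square, giving $3\bigl(\freq-\freq'-\tfrac{\delta(\acc+\acc')}{2}\bigr)^2 \le \tfrac{\delta^2}{4}(C_\varepsilon - |\acc-\acc'|^2)$ up to the $b$-factor, which rearranges to the stated bound on $\bigl|\freq'-\freq+\tfrac{\delta(\acc+\acc')}{2}\bigr|$ after taking square roots and dividing by $\sqrt3$.

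The only genuine ``obstacle'' is the algebraic completion of the square and tracking the constants $b$, $\delta$, and the factor $\tfrac14$ consistently; there is no analytic difficulty once the diagonal form is obtained, since the sublevel set of a positive-definite binary quadratic form is manifestly the intersection of the two displayed slab/disc conditions. I would therefore present the proof as: (i) rewrite $k_\delta\ge\varepsilon$ as an upper bound on $g_\delta$; (ii) complete the square to diagonalise $g_\delta$; (iii) read off the two inequalities, being careful about the factor of $b$ in $C_\varepsilon$ and correcting the apparent typo $\tau\mapsto\delta$ in the logarithm.
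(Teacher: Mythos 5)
Your argument is correct and is essentially the paper's own proof: both reduce $k_\delta\ge\varepsilon$ to the sublevel condition $g_\delta\le\eta$ with $\eta=-b\delta^3\log\big(\tfrac{2\pi b\delta^2}{\sqrt 3}\varepsilon\big)$ and then diagonalise the quadratic form as $g_\delta=3\big(\freq'-\freq+\tfrac{\delta(\acc+\acc')}{2}\big)^2+\tfrac{\delta^2}{4}(\acc-\acc')^2$ (the paper phrases the same computation as solving $z^\top M z\le\eta$ at $z=x'-\e^{\delta A}x$), and you are right that the $\tau^2$ in the displayed $C_\varepsilon$ is a typo for $\delta^2$. The only slip is your budget $\tfrac14 b\delta^2 C_\varepsilon$: since $C_\varepsilon=4\eta/\delta^2$, the right-hand side is exactly $\tfrac{\delta^2}{4}C_\varepsilon$ with no extra factor of $b$ (it is already absorbed into $C_\varepsilon$), so after dividing by $\tfrac{\delta^2}{4}$ the two inequalities come out precisely as stated.
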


\begin{remark}
	It holds $C_\varepsilon\ge0$ if and only if
	\begin{equation}
		\varepsilon\le \frac{\sqrt{3}}{2\pi b\delta^2}.
	\end{equation}
	Indeed, for any $(\freq,\acc)\in \R^2$, the r.h.s.\ above corresponds to $\max k_\delta(\freq,\acc\|\cdot,\cdot)$, and thus $\mathrm K^\varepsilon(\freq,\acc) = \varnothing$ for larger values of $\varepsilon$.
\end{remark}

The above allows to numerically implement $k_\delta$ as a family of sparse arrays. That is, let $G\subset \Lambda\times N$ be the chosen discretisation of the significant set of frequencies and chirpinesses. Then, to $\xi = (\freq, \acc)\in G$ we associate the array $M_{\xi}:G \to \R$ defined by
\begin{equation}
	M_{\xi}(\xi') =
	\begin{cases}
		k_\delta(\xi\|\xi') & \: \text{if }\xi' \in \mathrm K^\varepsilon(\xi),\\
		0& \: \text{otherwise.}
	\end{cases}
\end{equation}
Therefore, up to choosing the tolerance $\varepsilon\ll 1$ sufficiently small, the interaction term in \eqref{eq:WC}, evaluated at $\xi=(\freq,\acc)\in G$, can be efficiently estimated by
\begin{equation}
	 \int_{\mathbb R^2} w(\xi\|\xi')\sigma(a(t-\delta, \xi'))\,d\xi' \approx
	 \sum_{\xi'\in K^\varepsilon(\xi)} M_\xi(\xi')a(t-\delta,\xi').
\end{equation}

\subsection{Post-processing}
Both operations in the pre-processing phase are inversible: the STFT by inverse STFT, and the lift by integration along the $\acc$ variable (that is, summation of the discretized solution). The final output signal is thus obtained by applying the inverse of the pre-processing (integration then inverse STFT) to the solution $a$ of \eqref{eq:WC}. That is, the resulting signal is given by
\begin{equation}
  \hat s(t) = \operatorname{STFT}^{-1}
\left(
\int_{-\infty}^{+\infty} a(t,\freq,\acc)\diff \acc
\right).
\end{equation}

The following guarantees that $\hat s$ is real-valued and thus correctly represents a sound signal. From the numerical point of view, this implies that we can focus on solutions of \eqref{eq:WC} in the half-space $\{\freq\geq 0\}$, which can then be extended to the whole space by mirror symmetry.

\begin{proposition}
It holds that $\hat s(t)\in \R$ for all $t>0$.
%The same holds for its numerical approximation $\tilde{w}$.
\end{proposition}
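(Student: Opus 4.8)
The plan is to follow the Hermitian symmetry enjoyed by the STFT of a real signal through every stage of the pipeline (lift, Wilson--Cowan evolution, projection, inverse STFT) and show it forces the output to be real. First I would record the base symmetry: since $s$ is real-valued and the window $W$ is real, conjugating the definition of the STFT and substituting $t\mapsto t$ gives
\[
  \overline{S(\tau,\freq)} = \int_{\R} s(t)\,W(\tau-t)\,e^{-2\pi i t\freq}\,dt = S(\tau,-\freq).
\]
In particular $|S|$ is even in $\freq$, so $\partial_\tau|S|(\tau,\cdot)$ is even and $\partial_\freq|S|(\tau,\cdot)$ is odd. Substituting $(\freq,\acc)\mapsto(-\freq,-\acc)$ into the defining relation $\acc\,\partial_\freq|S|+\partial_\tau|S|=0$ of $\Sigma$ in \eqref{eq:Sigma} leaves it invariant, so $\delta_\Sigma(\tau,\freq,\acc)=\delta_\Sigma(\tau,-\freq,-\acc)$. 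Combined with the conjugation symmetry of $S$, the definition \eqref{eq:lift-distr} yields the input symmetry $\overline{I(t,\freq,\acc)}=I(t,-\freq,-\acc)$, understood in the distributional sense.

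Next I would propagate this through the processing stage, proving that the solution $a$ of \eqref{eq:WC1} satisfies the same relation $\overline{a(t,\freq,\acc)}=a(t,-\freq,-\acc)$. The natural route is uniqueness: set $b(t,\freq,\acc):=\overline{a(t,-\freq,-\acc)}$ and show $b$ solves the same initial-value problem as $a$. Conjugating \eqref{eq:WC1} evaluated at $(-\freq,-\acc)$ and using three facts does the job. The kernel $k_\delta$ is real and, from the explicit expression in Proposition~\ref{p:expl-kernel}, invariant under the total reflection, $k_\delta(-\freq,-\acc\|-\freq',-\acc')=k_\delta(\freq,\acc\|\freq',\acc')$, because $g_\delta$ is even under $(\freq,\acc,\freq',\acc')\mapsto-(\freq,\acc,\freq',\acc')$. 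The sigmoid commutes with conjugation, since $\tilde\sigma$ is real-valued gives $\overline{\sigma(\rho e^{i\theta})}=\tilde\sigma(\rho)e^{-i\theta}=\sigma(\overline{\rho e^{i\theta}})$. Finally $\overline{I(t,-\freq,-\acc)}=I(t,\freq,\acc)$ from the previous step. After the change of variables $(\freq',\acc')\mapsto(-\freq',-\acc')$ in the integral, the interaction term collapses exactly to $\gamma\int_{\R^2}k_\delta(\freq,\acc\|\freq',\acc')\,\sigma(b(t-\delta,\freq',\acc'))\,d\freq'\,d\acc'$, so $b$ and $a$ obey the same delayed equation with the same input and both vanish for $t\le 0$. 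Solving by the method of steps on successive intervals of length $\delta$ then forces $a=b$.

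With the symmetry of $a$ in hand, the projection $\hat S(t,\freq)=\int_{\R}a(t,\freq,\acc)\,d\acc$ inherits $\overline{\hat S(t,\freq)}=\hat S(t,-\freq)$: conjugate, use the symmetry of $a$, then substitute $\acc\mapsto-\acc$. To finish, I would use the standard synthesis formula with the real window, of the form
\[
  \hat s(t) = c\int_{\R}\int_{\R}\hat S(\tau,\freq)\,W(\tau-t)\,e^{-2\pi i t\freq}\,d\freq\,d\tau,\qquad c\in\R,
\]
and conjugate it; applying $\overline{\hat S(\tau,\freq)}=\hat S(\tau,-\freq)$ and then substituting $\freq\mapsto-\freq$ returns the original integral, so $\overline{\hat s(t)}=\hat s(t)$, i.e.\ $\hat s(t)\in\R$.

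The hard part will be the second step: the symmetry must survive the nonlinear, delayed, non-convolutional integral term. Its success rests on two ingredients that must be verified by hand --- the reflection invariance of the Heisenberg kernel $k_\delta$ and the commutation of $\sigma$ with complex conjugation (precisely the reason $\sigma$ is chosen not to act on the phase) --- together with a clean uniqueness statement for the delay equation. Some care is also warranted because $I$ is distributional, so the identity $\overline{I(t,\freq,\acc)}=I(t,-\freq,-\acc)$ and its use inside the integral should be read against test functions rather than pointwise.
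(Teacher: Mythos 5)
Your proposal is correct and follows essentially the same route as the paper: both establish the symmetry $a(t,\freq,\acc)=\overline{a(t,-\freq,-\acc)}$ by propagating the Hermitian symmetry of $S$ through the lift, using the reflection property of $k_\delta$, and arguing by the method of steps on intervals of length $\delta$. If anything, you are slightly more explicit than the paper about the two points it leaves implicit --- that $\sigma$ commutes with complex conjugation and that the final inverse STFT of a Hermitian-symmetric $\hat S$ is real --- so no changes are needed.
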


\begin{proof}
Let us denote
\begin{equation}
\hat S(t,\freq)=\int_{-\infty}^{+\infty} a(t,\freq,\acc)\diff \acc,
\end{equation}
so that $\hat s = \operatorname{STFT}^{-1}(\hat S)$. Moreover, for any function $f(t,\freq,\acc)$, we let $f^\star(t,\freq,\acc):=\bar f(t,-\freq,-\acc)$.

To prove the statement it is enough to show that
\begin{equation}\label{eq:star}
    a(t,\cdot,\cdot)\equiv a^\star(t,\cdot,\cdot) \qquad \forall t\ge0.
\end{equation}
This is trivially satisfied for $t\le 0$, since in this case $a(t,\cdot,\cdot)\equiv 0$. 

We now claim that if \eqref{eq:star} holds on $[0,T]$ it holds on $[0,T+\delta]$, which will prove it for all $t\ge 0$.
%Hence, by standard arguments, it suffices to prove that if the above holds on $[0,T]$, then the r.h.s.\ of \eqref{eq:WC} has the same kind of symmetry at $t=T$.
%
By definition of $I$ and the fact that $S(t,-\freq)=\overline{S(t,\freq)}$, we immediately have $I \equiv I^\star$.
On the other hand, the explicit expression of $k_\delta$ in \eqref{eq:kernel} yields that 
\begin{equation}
  k_\delta(-\freq,-\acc\|\freq',\acc') = k_\tau(\freq,\acc\|-\freq',-\acc').
\end{equation}
Then, for all $t\le T+\delta$, we have
\begin{equation*}
  \begin{split}
      \int_{\mathbb R^2} &k_{\delta}(-\freq,-\acc\|\freq',\acc')\sigma(a(t-\tau, \freq',\acc'))\,d\freq'\,d\acc'\\
  &=
\int_{\mathbb R^2}
k_{\delta}(\freq,\acc\|\freq'',\acc'')\sigma(a^\star(t-\tau, \freq'',\acc''))\,d\freq''\,d\acc''\\
&=
\int_{\mathbb R^2}
k_{\delta}(\freq,\acc\|\freq'',\acc'')\sigma(a(t-\tau, \freq'',\acc''))\,d\freq''\,d\acc''.
  \end{split}
\end{equation*}
%This proves the required invariance for the r.h.s.\ of \eqref{eq:WC}, and thus \eqref{eq:star}.
%
%The discretisation of operations on $\tilde{w}$ satisfying the same symmetries, we have the full statement.
A simple argument, e.g., using the variation of constants method, shows that these two facts imply the claim, and thus the statement.
\end{proof}

\section{Experiments}\label{s:experiments}

In Figure~\ref{fig:bars-wave} we present a series of experiments on simple synthetic sounds in order to exhibit some key features of our algorithm. These experiments can be reproduced via the code available at \url{https://www.github.com/dprn/WCA1}.

The first example, Figure~\ref{F:linear}, is a simple linear chirp, such that the dominating frequency  depends linearly on time  (i.e., corresponding to $\omega(t)=\mu t$ for some $\mu\in \mathbb{R}$). One observes that the processed sound presents the same feature but for a longer duration. The parameters in the experiment have been chosen to emphasize the effect of the modelling equation: the reconstruction should not present a tail that is as pronounced, however this allows to highlight the diffusive effect along the lifted slope.

The second example, Figure~\ref{F:interrupted}, corresponds to the same linear chirp as Figure~\ref{F:linear}, that has been interrupted in its middle section, creating two disjoint linear chirps. Thanks to the transport effect of the algorithm, the gap between the two chirps is bridged in the processed signal.
For this illustration, the interruption lasts about twice as long as the delay. 

The third example, Figure~\ref{F:2bars}, consists of the sum of two linear chirps with different slopes. The slopes have been picked to suggest that linear continuations of the chirps should intersect. This is indeed what happens in the processed signal.
However, notice that the resulting crossing happens almost as a sum of the two chirps processed independently, with close to no interaction at the crossing. This is purely an effect of the lift procedure. The increasing chirp is (predominantly) lifted to a stratum corresponding to a positive slope, while the decreasing chirp is lifted to a negative slope stratum. De facto, their evolution under the Wilson--Cowan equation is decoupled in the 3D augmented space.

The fourth and last example, Figure~\ref{F:sin}, corresponds to a nonlinear chirp, roughly corresponding to choosing $\omega(\tau) = \sin (m \tau)$ in \eqref{eq:sin-omega}. 
The construction of the model favors linearity in the evolution of perceived frequencies. We can observe how the more linear elements of the input result in more diffusion.

\begin{figure}[b]
\centering

\subfloat[
				Linear chirp. (Parameters:  $\alpha=55$, $\beta=1$, $\gamma=55$, $b=0.05$.)\label{F:linear}
				]{
					  \includegraphics[width = .475\textwidth]{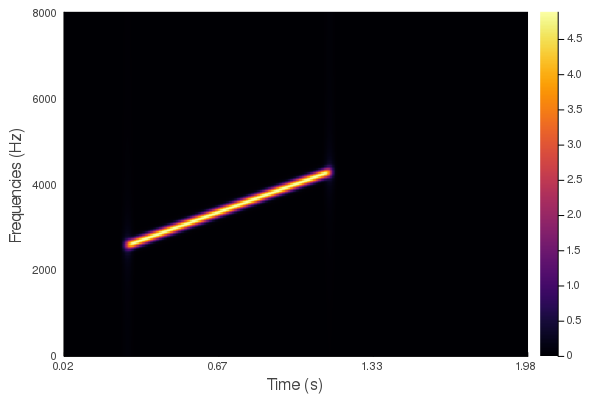}
					  \hspace{.02\linewidth}
					  \includegraphics[width = .475\textwidth]{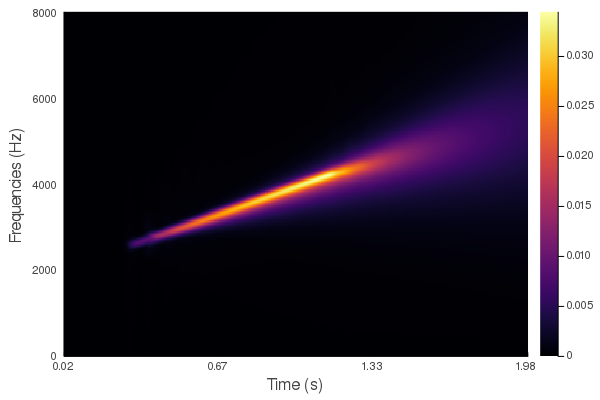}
 				}
 				
\subfloat[
				Interrupted chirp. (Parameters:  $\alpha=55$, $\beta=1$, $\gamma=55$, $b=0.05$.) \label{F:interrupted}
				]{
					  \includegraphics[width = .475\textwidth]{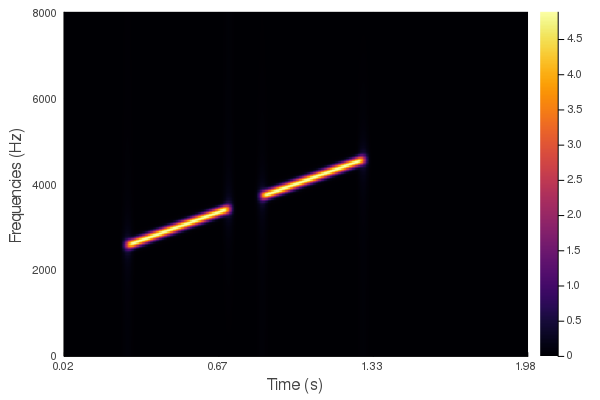}
					  \hspace{.02\linewidth}
					  \includegraphics[width = .475\textwidth]{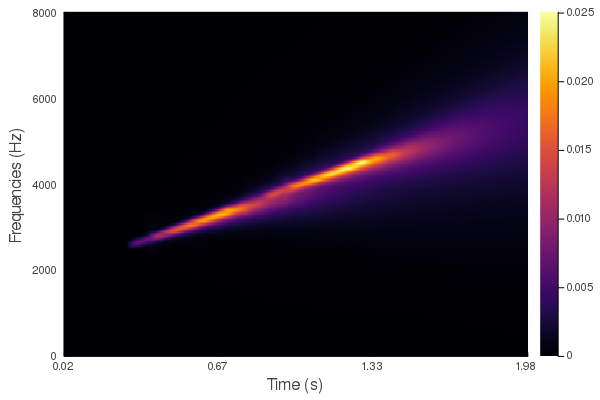}
 				} 				

\caption{
Experiments on synthetic sounds with varying frequencies. \emph{Left:} The STFT of the original sound. \emph{Right:} The STFT of the processed sound with delay $\delta=0.0625s$. Parameters vary for each experiment depending on the desired effect we wish to highlight. Each time, only the positive frequencies are shown: negative frequencies are recovered via the Hermitian symmetry of the Fourier transform on real signals.
%
%Figures (a), (b), (c) and (d) correspond to the values $b=0.1$, $0.05$, $0.02$, $0.5$, respectively.
}
\label{fig:bars-wave}

\end{figure} 

\begin{figure}[t]
 \ContinuedFloat		

\subfloat[
				Intersecting chirps. (Parameters:  $\alpha=53$, $\beta=1$, $\gamma=55$, $b=0.01$.) \label{F:2bars}
				]{
					  \includegraphics[width = .475\textwidth]{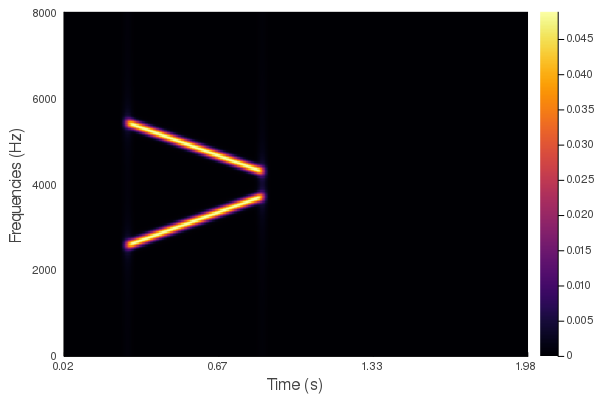}
					  \hspace{.02\linewidth}
					  \includegraphics[width = .475\textwidth]{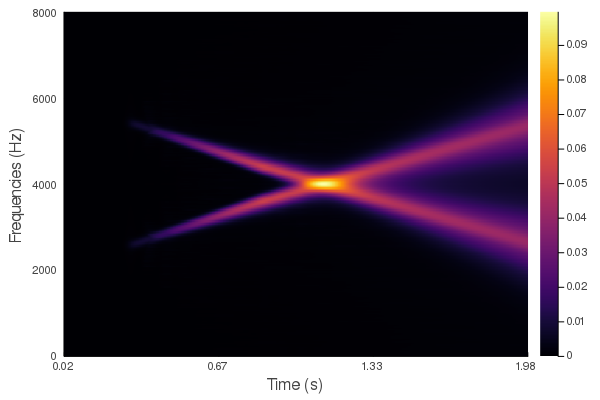}
 				}

\subfloat[
				Nonlinear chirp. (Parameters:  $\alpha=53$, $\beta=1$, $\gamma=55$, $b=0.2$.) \label{F:sin}
				]{
					  \includegraphics[width = .475\textwidth]{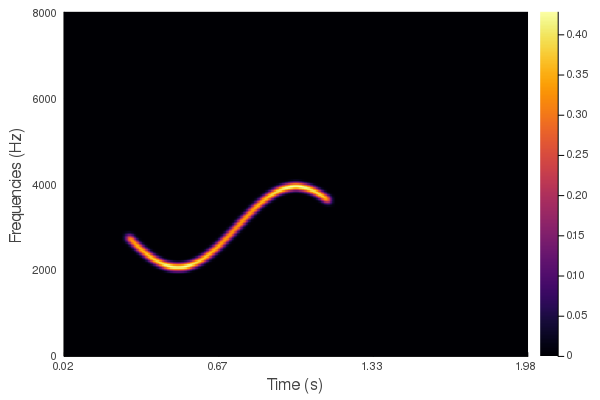}
					  \hspace{.02\linewidth}
					  \includegraphics[width = .475\textwidth]{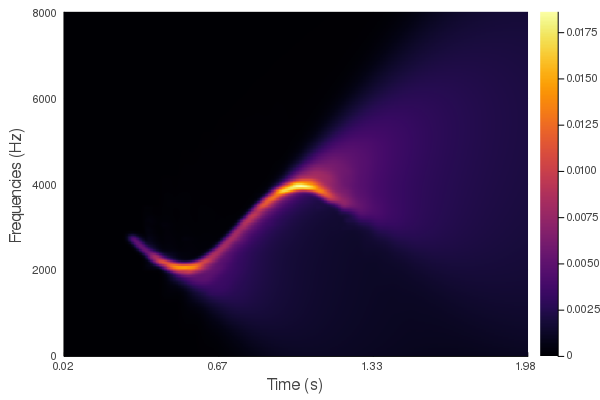}
 				} 				

\caption{Experiments on synthetic sounds with varying frequencies. (Continued.) %\label{fig:bars-wave} 
}

\end{figure}

\section{Conclusion}

In this work we presented a sound reconstruction framework inspired by the analogies between visual and auditory cortices.
Building upon the successful cortical inspired image reconstruction algorithms, the proposed framework lifts time-frequencies representations of signals to the 3D contact space, by adding instantaneous chirpiness information. These redundant representations are then processed via adapted diffeo-integral Wilson-Cowan equations.

The promising results obtained on simple synthetic sounds, although preliminary, suggest possible applications of this framework to the problem of degraded speech. 
The next step will be to test the reconstruction ability of normal-hearing humans on originally degraded speech material compared to the same speech material after algorithm reconstruction. 
Such an endeavour will contribute to the understanding of the auditory mechanisms emerging in adverse listening conditions. It will furthermore
help to deepen our knowledge on general organization principles underlying the functioning of the human auditory cortex.

\section*{Acknowledgments}
The first three authors have been supported by the ANR project SRGI ANR-15-CE40-0018
and by the ANR project Quaco ANR-17-CE40-0007-01. 
This study was also supported by the IdEx Universite de Paris, ANR-18-IDEX-0001, awarded to the last author.

% -------------------------------------------------------------------------------------------------------------------
%   Appendix  (optional)

\appendix

\section{Integral kernel of the Kolmogorov operator}
\label{a:kernel}

The result in Proposition~\ref{p:expl-kernel} is well-known. E.g., by applying \cite[Proposition~9]{Barilari2017} and letting $x=(\freq,\acc)$ and $x'=(\freq',\acc')$ one gets that the kernel is 
\begin{equation}
	k_\delta(\freq,\nu\|\freq',\nu') 
	= \frac{1}{2\pi\sqrt{\det D_\delta}} \exp\left[-\frac{1}{2}
	\left(x'- \e^{\delta A}x\right)^{\top}D_\delta^{-1}	\left(x'-\e^{\delta A}	x\right)\right],
\end{equation}
%
%
% Here, we show how to derive it from \cite[Proposition~9]{Barilari2017}. W.r.t.\ the notations of that paper, \eqref{eq:kolmo} is obtained by considering $n=2$ and
% \begin{equation}
% A=\begin{pmatrix}
% 0&1\\0&0
% \end{pmatrix} 
% \qquad\text{and}\qquad 
% B=\begin{pmatrix}
% 0\\ \sqrt{2b}
% \end{pmatrix}.
% \end{equation}
% Then, letting $x=(\freq,\acc)$ and $x'=(\freq',\acc')$ the kernel is 
% \begin{multline*}
% 	k_\delta(\freq,\nu\|\freq',\nu') \\
% 	= \frac{1}{2\pi\sqrt{\det D_\delta}} \exp\left[-\frac{1}{2}
% 	\left(x'- \e^{\delta A}x\right)^{\top}D_\tau^{-1}	\left(x'-\e^{\tau A}	x\right)\right],
%  \end{multline*}
where 
 \begin{equation}
 A=\begin{pmatrix}
 0&-1\\0&0
 \end{pmatrix} 
 \qquad\text{and}\qquad 
 B=\begin{pmatrix}
 0\\ \sqrt{2b}
 \end{pmatrix},
 \end{equation}
 and
\begin{equation}
D_\delta
=\e^{\delta A} \left[\int_{0}^{\tau } \e^{-\sigma A}BB^*\e^{-\sigma A^*}\diff \sigma \right]\e^{\tau A^*}.
\end{equation}

Direct computations yield
\begin{equation}
D_\delta
=2b 
	\begin{pmatrix}
	\delta^3/3 &-\tau^2/2
	\\
	-\delta^2/2&\tau
	\end{pmatrix}, 
	\quad\text{and}\quad \det D_\delta = \frac{b^2\tau^4}3.
\end{equation}
Therefore,
\begin{equation}
	\frac12 D_\delta^{-1} = \frac{1}{b \tau^3}M, 
	\quad\text{where}\quad
	M = \begin{pmatrix}
	3 & 3 \delta/2\\
	3 \delta/2&\delta^2
	\end{pmatrix}.
\end{equation}
Finally, the statement follows by letting
\begin{equation}
	g_\delta(x\|x') = \left(x'- \e^{\delta A}x\right)^{\top} M	\left(x'-\e^{\delta A}x\right).
\end{equation}

We now turn to an argument for Proposition~\ref{p:threshold}. Observe that $k_\delta(x\|x')\ge \varepsilon$ if and only if
\begin{equation}
	g_\delta(x\|x')\le \eta:=-b\delta^3\log\left( \frac{2\pi b\delta^2}{\sqrt 3} \varepsilon \right).
\end{equation}
	Then, we start by solving $z^\top M z\le \eta$, for $z\in \R^2$. One can check that this is verified if and only if
	\begin{equation}
		|z_2|\le \sqrt{\frac{4\eta}{\delta^2}},
		\quad\text{and}\quad
		\left| z_1 + \frac{\delta}{2} z_2 \right|\le \frac{\delta}{2\sqrt 3}\sqrt{\frac{4\eta}{\delta^2}-z_2^2}.
	\end{equation}
	Since $C_\varepsilon = 4\eta/\delta^2$ the statement follows by computing the above at $z = x'-e^{\tau A}x$.

\section{Heisenberg group action on the contact space}\label{a:heisenberg}

Recall that the short-time Fourier transform of a signal $s\in L^2(\R)$ is given by
\begin{equation}
  S(\tau, \omega) := \operatorname{STFT}(s)(\tau,\omega)= \int_{\R} s(t)W(\tau-t)e^{2\pi i t\omega}\,dt.
\end{equation}
Here, $W:\R\to [0,1]$ is a compactly supported (smooth) window, so that $S\in L^2(\R^2)$.
Fundamental operators in time frequency analysis \cite{Karlheinz2001} are time and phase shifts, acting on signals $s\in L^2(\R)$ by
\begin{equation}
  T_\theta s(t):= s(t-\theta)
  \quad\text{and}\quad
  M_\lambda s(t):=e^{2\pi i \lambda t}s(t),
\end{equation}
for $\theta,\lambda\in\R$.
One easily checks that $T_\theta$ and $M_\lambda$ are unitary operators on $L^2(\R)$. By conjugation with the short-time Fourier transform, they naturally define the unitary operators on $L^2(\R^2)$ given by
\begin{align}\label{eq:trans-stft}
  T_\theta S(\tau,\omega) &= e^{-2\pi i \omega \theta}S(\tau-\theta,\omega),\\
  M_\lambda S(\tau,\omega) &= S(\tau,\omega-\lambda).
\end{align}

The relevance of the Heisenberg group in time-frequency analysis is a direct consequence of the commutation relation
\begin{equation}
  T_\theta M_\lambda T_\theta^{-1} M_\lambda^{-1} = e^{-2\pi i\lambda\theta}\idty.
\end{equation}
Indeed, this shows that the operator algebra generated by $(T_\theta)_{\theta\in\R}$ and $(M_\lambda)_{\lambda\in\R}$ coincides with the Heisenberg group $\mathbb{H}^1$ via the representation $U:\mathbb{H}^1\to \mathcal{U}(L^2(\R^2))$ defined by
\begin{equation}\label{eq:heis-action}
  U(\theta,\lambda,\zeta)=e^{-2\pi i\zeta}T_\theta M_\lambda.
\end{equation}

The above discussion shows that the Heisenberg group can be regarded as the natural space of symmetries of sound signals. In particular, any meaningful treatment of these signals should respect such symmetry. In the case of our model, this is the content of the following result.

\begin{proposition}
    \label{prop:symmetry}
    The sound processing algorithm presented in this paper commutes with the Heisenberg group action \eqref{eq:heis-action} on sound signals. That is, if the input sound signal $s\in L^2(\R^2)$ yields $\hat{s}$ as a result, then, for any $(\theta,\lambda,\zeta)\in \mathbb{H}^1$, the input $U(\theta,\lambda,\zeta)s$ yields $U(\theta,\lambda,\zeta)\hat{s}$ as a result.
\end{proposition}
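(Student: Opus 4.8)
The strategy is to track how each of the three phases of the algorithm---preprocessing (STFT and lift), processing (Wilson--Cowan), and postprocessing (integration and inverse STFT)---transforms under the generators $T_\theta$ and $M_\lambda$ of $\mathbb{H}^1$, and to verify that each phase intertwines the action. Since $U(\theta,\lambda,\zeta)$ is a product of a phase factor $e^{-2\pi i\zeta}$ with $T_\theta M_\lambda$, and since the whole algorithm is linear in $S$ except for the sigmoid (which by construction acts only on the modulus $\rho$ and leaves the phase $e^{i\arg}$ untouched), I expect the scalar phase $e^{-2\pi i\zeta}$ to pass through every step trivially. So the real content is to check commutation with $T_\theta$ and with $M_\lambda$ separately.

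First I would record, using \eqref{eq:trans-stft}, the action on the sound image: $T_\theta$ sends $S(\tau,\omega)$ to $e^{-2\pi i\omega\theta}S(\tau-\theta,\omega)$ and $M_\lambda$ sends it to $S(\tau,\omega-\lambda)$. The key observation for the lift is that both of these transformations act on $|S|$ by a \emph{rigid motion of the time-frequency plane}: $M_\lambda$ translates in $\omega$, while $T_\theta$ translates in $\tau$ (the extra unimodular factor $e^{-2\pi i\omega\theta}$ does not affect $|S|$). Consequently the chirpiness $\nu=d\omega/d\tau$ is invariant under $M_\lambda$ and also under $T_\theta$ (a pure $\tau$-shift does not change the slope of a level line). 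I would then check directly from the definition \eqref{eq:Sigma} that the surface $\Sigma$ is equivariant: a $\tau$-translation shifts $\Sigma$ in $\tau$ and an $\omega$-translation shifts it in $\omega$, in both cases leaving $\nu$ fixed. Combining this with \eqref{eq:lift-distr}, the lift $I(\tau,\omega,\nu)=S(\tau,\omega)\delta_\Sigma$ therefore transforms in exactly the same way as $S$ does, now as a function on the augmented space, with $\nu$ carried along as a spectator variable.

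Next I would verify that the Wilson--Cowan evolution \eqref{eq:WC1} commutes with these shifts. The terms $-\alpha a$ and $\beta I$ are pointwise and hence trivially equivariant. The delicate term is the integral: I would use the symmetry of the kernel $k_\delta$ already exploited in the preceding proposition, together with the translation-invariance of the Lebesgue measure $d\omega'\,d\nu'$, to show that shifting $(\omega,\nu)$ by $(\lambda,0)$ (for $M_\lambda$) and substituting $\omega'\mapsto\omega'-\lambda$ in the integral leaves the convolution-type term intact, while the $\tau$-shift of $T_\theta$ only reindexes the delay $t-\delta$. Here the fact that $\tilde\sigma$ acts on $\rho$ alone is what lets the phase factor $e^{-2\pi i\omega\theta}$ (which depends on $\omega$) pass cleanly through $\sigma$ without being deformed. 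The cleanest way to package all of this is the variation-of-constants/Duhamel representation of the solution of \eqref{eq:WC1}: since the initial datum ($a\equiv0$ for $t\le0$), the linear part, and the nonlinear integral term are each equivariant, a uniqueness argument (or induction on the intervals $[0,T+\delta]$ exactly as in the preceding proof) shows $a$ inherits the symmetry.

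Finally, the postprocessing integrates out $\nu$ and applies $\mathrm{STFT}^{-1}$. Integration $\int_\R a\,d\nu$ commutes with $\tau$- and $\omega$-shifts and kills the spectator variable, returning $\hat S$ that transforms under $T_\theta,M_\lambda$ by precisely \eqref{eq:trans-stft}; applying $\mathrm{STFT}^{-1}$ then returns the time/phase shift on the signal, so that $U(\theta,\lambda,\zeta)s\mapsto U(\theta,\lambda,\zeta)\hat s$. I expect the \textbf{main obstacle} to be the equivariance of the integral interaction term under $M_\lambda$: one must confirm that translating the base point $(\omega,\nu)\mapsto(\omega-\lambda,\nu)$ and the dummy variables together leaves $k_\delta$ genuinely invariant, despite $k_\delta$ \emph{not} being a convolution kernel in $(\omega,\nu)$. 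The resolution is that, as is visible from the explicit formula \eqref{eq:kernel}--\eqref{eq:kolmo}, $k_\delta$ depends on $\omega,\omega'$ only through the difference $\omega-\omega'$ (it is the non-translation-invariance in $\nu$ that prevents it from being a full convolution), so invariance under the simultaneous $\omega$-shift holds even though invariance under a $\nu$-shift would not---and only the former is required here.
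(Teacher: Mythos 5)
Your proposal follows essentially the same route as the paper's proof: decompose the algorithm into $\operatorname{STFT}^{-1}\circ\operatorname{Proj}\circ\operatorname{WC}\circ\operatorname{Lift}\circ\operatorname{STFT}$ and check equivariance of each stage, using that $\Sigma$ depends only on $|S|$ (so the phase factors of $\tilde U$ pass through the lift) and that $\sigma$ acts only on the modulus. You in fact make explicit one step the paper leaves implicit --- that $k_\delta$ depends on $\freq,\freq'$ only through $\freq-\freq'$, which is what makes the interaction term commute with the frequency shift $M_\lambda$ despite $k_\delta$ not being a convolution kernel --- so your argument is, if anything, slightly more complete than the published one.
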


\begin{proof}
    We can schematically write the algorithm as:
    \begin{equation}
        \hat{s} = \operatorname{STFT}^{-1}\circ \operatorname{Proj} \circ \operatorname{WC}\circ \operatorname{Lift}\circ \operatorname{STFT}(s).
    \end{equation}
    Here, $\operatorname{Lift}$ is the lift operator defined in Section~\ref{sec:lift}, $\operatorname{WC}$ denotes the Wilson-Cowan evolution \eqref{eq:WC}, and $\operatorname{Proj}$ denotes the projection from the augmented space to the time-frequency representation, defined by
    \begin{equation}
        \operatorname{Proj}a (t,\omega) = \int_{-\infty}^{+\infty} a(t,\omega,\nu)\,d\nu. 
    \end{equation}

    Observe that \eqref{eq:trans-stft} shows that $U$ induces a representation of $\mathbb{H}^1$ on $L^2(\R^2)$, the codomain of the STFT, which we will denote by $\tilde U$. Thus, to prove the statement it suffices to show that
    \begin{equation}
        [\operatorname{Proj} \circ \operatorname{WC}\circ \operatorname{Lift}, \tilde U(\theta,\lambda,\zeta)] = 0, \qquad \forall (\theta,\lambda,\zeta)\in \mathbb{H}^1.
    \end{equation}

    Recall now that $\operatorname{Lift}$ associates with $S\in L^2(\R^2)$ a distribution of the form $\operatorname{Lift}[ S] (\tau,\freq,\acc)=S(\freq,\acc)\delta_\Sigma(\tau,\freq,\acc)$ for some $\Sigma\subset \R^3$. Due to the fact that $\Sigma$ is defined via the modulus of $S$, it is unaffected by the phase factors appearing in the representation $\tilde U$. That is, the lift of $\tilde U(\theta,\lambda,\zeta)S$ is given by
    \begin{equation}
  \begin{split}
      \operatorname{Lift}\big[\tilde U(\theta, \lambda,\zeta)S\big](\tau,\omega,\nu) 
        &= e^{-2\pi i (\zeta+\omega\theta)}\delta_{\Sigma}(\tau-\theta,\omega-\lambda,\acc)S(\tau-\theta,\freq-\lambda)\\
        &=e^{-2\pi i (\zeta+\omega\theta)}\operatorname{Lift}[S] (\tau-\theta,\omega-\lambda,\acc).
  \end{split}
\end{equation}
It is then immediate to check that $[\operatorname{Proj}\circ\operatorname{Lift}, \tilde U(\theta,\lambda,\zeta)]=0$.

We are left to verify that the operator $\operatorname{WC}$ commutes with $\tilde U(\theta,\lambda,\zeta)$. 
The commutation is trivial for the linear terms. On the other hand, the non-linearity introduced in the integral term commutes with $\tilde U(\theta,\lambda,\zeta)$ thanks to the fact that $\sigma(\rho e^{i\phi}) = e^{i\phi}\sigma(\rho)$ for all $\rho>0,\phi\in\R$.
\end{proof}

%\printbibliography

\bibliographystyle{abbrv}
\bibliography{biblio}

\end{document}